\newtheorem{theorem}{Theorem}[section]
\newtheorem{lemma}[theorem]{Lemma}
\title{Improved Distance Queries in Planar Graphs}
\author{Yahav Nussbaum\thanks{The Blavatnik School of Computer Science, Tel Aviv University, 69978 Tel Aviv, Israel. Email address: \texttt{yahav.nussbaum@cs.tau.ac.il}}}
\date{}
\begin{document}

\maketitle

\begin{abstract}

There are several known data structures that answer distance queries between two arbitrary vertices in a planar graph. The tradeoff is among preprocessing time, storage space and query time. In this paper we present three data structures that answer such queries, each with its own advantage over previous data structures. The first one improves the query time of data structures of linear space. The second improves the preprocessing time of data structures with a space bound of $O(n^{4/3})$ or higher while matching the best known query time. The third data structure improves the query time for a similar range of space bounds, at the expense of a longer preprocessing time. The techniques that we use include modifying the parameters of planar graph decompositions, combining the different advantages of existing data structures, and using the Monge property for finding minimum elements of matrices.

\end{abstract}

\section{Introduction}

There are several known data structures that answer distance queries in planar graphs. We survey them below. All of these data structures use the following basic idea. They split the graph into \emph{pieces}, where each piece is connected to the rest of the graph only through its \emph{boundary vertices}. Then, every path can go from one piece to another only through these boundary vertices. The different data structures find different efficient ways to store or compute the distance between two boundary vertices or between a boundary vertex and a non-boundary vertex.

Frederickson \cite{F87} gave the first data structures that can answer distance queries in planar graphs fast. He gave a data structure of linear size with $O(n \log n)$ preprocessing time that can find the shortest path tree rooted at any vertex in $O(n)$ time, where $n$ is the number of vertices in the graph. This leads also to an $O(n^2)$ solution to the all-pairs shortest-paths problem, and implies a distance query data structure of size $O(n^2)$ with $O(1)$ query time. Feuerstein and Marchetti-Spaccamela \cite{FM93} modified the data structure of \cite{F87} and showed how to decrease the time of a distance query by increasing the preprocessing time. They do not provide an analysis of their data structure in terms of preprocessing time, storage space, and query time, but they show the total running time of $k$ queries, which is $O(nk + n \log n)$, $O(n^{4/3}k^{1/3})$, $O(n^{5/3})$, $O(n\sqrt{k})$ for $k \leq \sqrt{n}$, $\sqrt{n} \leq k \leq n$, $n \leq k \leq n^{4/3}$, $n^{4/3} \leq k \leq n^2$, respectively. This solution actually consists of three different data structures for the three cases $k \leq \sqrt{n}$, $\sqrt{n} < k \leq n$ and $n < k$, where the data structure for the first case is the one of~\cite{F87}.

Henzinger, Klein, Rao and Subramanian \cite{HKRS97} gave an $O(n)$ algorithm for the single-source shortest path problem. This implies a trivial distance query data structure, which uses the algorithm, and takes $O(n)$ space and query time.

Djidjev \cite{D96} gave three data structures. We will use the specific section number in \cite{D96} -- \S3, \S4, or \S5, to refer to each one of them. The first one \cite[(\S3)]{D96} works for $S \in [n^{3/2}, n^2]$ and has size $O(S)$, $O(S)$ preprocessing time, and $O(n^2/S)$ query time. The same data structure was also presented by Arikati et al.~\cite{ACC+96}. This data structure is similar to the two data structures of \cite{FM93}, but takes advantage of the algorithm of \cite{HKRS97} to get a better preprocessing time. The second \cite[(\S4)]{D96} works for $S \in [n, n^{3/2}]$ and has size $O(S)$, $O(n\sqrt{S})$ preprocessing time, and $O(n^2/S)$ query time. The third data structure \cite[(\S5)]{D96} works for $S \in [n^{4/3}, n^2]$, has size $O(S)$, $O(n\sqrt{S})$ preprocessing time, and $O(n \log(n/\sqrt{S}) / \sqrt{S})$ query time.\footnote{Djidjev \cite[(\S5)]{D96} presents the result for $S \in [n^{4/3}, n^{3/2}]$ with $O(n \log (n) / \sqrt{S})$ query time, however the same data structure works for a larger range of $S$, and the running time is actually $O(n \log(n/\sqrt{S}) / \sqrt{S})$.\label{ftnt:D96}} Chen and Xu \cite{CX00} presented a data structure with the same time and space bounds.\footnote{Chen and Xu \cite{CX00} do not bound the time and space of the data structure in terms of $S$, the bounds here are derived by setting $r = n^2/S$ in the bounds that appear below Lemma 28 (page 477) of \cite{CX00}; the bounds stated in \cite{CX00} depend on the minimum number $p$ of faces required to cover all vertices of the graph ($p$ is called \emph{face-on-vertex covering}), these bounds are obtained using hammock decomposition \cite{F95}, which can be applied to any planar distance data structure.\label{ftnt:CX00}}

Fakcharoenphol and Rao \cite{FR06} gave a data structure with $O(n \log n)$ space and $O(n \log^3 n)$ preprocessing time and $O(\sqrt{n}\log^2 n)$ query time. Klein \cite{K05} improved the preprocessing time of the data structure to $O(n \log^2 n)$.

Cabello \cite{C06} presented a data structure that uses $O(S)$ space and can be constructed in $O(S)$ time for $S \in [n^{4/3}\log^{1/3} n, n^2]$, and answers a query in $O(n \log^{3/2} (n)/\sqrt{S})$ time. This data structure can answer $k$ queries in a total of $O(n^{4/3}\log^{1/3}n+k^{2/3}n^{2/3}\log n)$ time. If the queries are known in advance, the algorithm of \cite{C06} avoids storing the entire structure, and uses only $O(n + k)$ space.

Other data structures exist for outerplanar graphs (some use the fact that these graphs have treewidth $2$) and planar graphs with small face-on-vertex covering \cite{B93,CZ00,DPZ91,DPZ95,F95,F96}, for planar graphs with bounded edge lengths \cite{E99,KK06}, and for approximate distance queries in planar graphs \cite{ACC+96,C95,K02,K05,T04}.

In this paper we present three new data structures for the problem:

\begin{itemize}
	\item Section \ref{sec:A}: A data structure with $O(n)$ space, $O(n \log n)$ preprocessing time, and $O(n^{1/2+\varepsilon})$ query time, for a constant $\varepsilon > 0$. This data structure has the best known query time achievable with linear space. This also improves the total running time for answering $k$ distance queries when $k$ is $O(n^{1/2-\varepsilon})$ and $\omega(\log n)$, and if we limit ourselves to data structures with $O(n + k)$ space then the upper bound on the range of $k$ grows to $O(n^{5/6-\varepsilon})$. The data structure is based on the data structure of \cite{FR06}, by replacing the \emph{recursive decomposition} of \cite{FR06} with recursive \emph{$r$-decomposition} as in \cite{F87}. As the data structure of \cite{FR06}, our data structure also generalizes to graphs embedded in a surface of bounded genus.
	\item Section \ref{sec:B}: For $S \in [n^{4/3},n^2]$, a data structure with $O(S)$ space, $O(S \log n)$ preprocessing time, and $O(n \log(n/\sqrt{S}) / \sqrt{S})$ query time. This data structure matches the storage space and query time of \cite[(\S5)]{CX00,D96}, which is the best query time for this range of storage space, with a better preprocessing time for $S = o(n^2/\log^2 n)$. The data structure is obtained by combining a preprocessing algorithm similar to \cite{C06} with a data structure similar to \cite[(\S5)]{D96}.
	\item Section \ref{sec:C}: For $S \in [n^{4/3},n^2]$, a data structure with $O(S)$ space, requiring $O((S^{3/2} / \sqrt{n}) \log^2 n)$ preprocessing time, and $O(n / \sqrt{S})$ query time. This data structure has the fastest query time for the same range of storage space as the previous data structure, but with a longer preprocessing time. The fast query time is obtained using an efficient minimum search in a Monge matrix.
\end{itemize}

The different data structures are summarized in Table \ref{table:sum}.

\begin{table}[t]
	\centering
	\small
	\begin{tabular}{|c|c|c|c|}
		\hline
		Reference & Storage space $O(S)$ & Query time & Preprocessing time\\
		\hline
		\cite{F87,HKRS97} & $S = n^2$ & $O(1)$ & $O(n^2)$\\
		\cite[(\S3)]{ACC+96,D96} & $S \in [n^{3/2}, n^2]$ & $O(n^2/S)$ & $O(S)$\\
		\cite{C06} & $S \in [n^{4/3}\log^{1/3}n,n^2]$ & $O(n \log^{3/2}(n) /\sqrt{S})$ & $O(S)$\\
		\cite[(\S5)]{CX00,D96} & $S \in [n^{4/3},n^2]$ & $O(n \log(n/\sqrt{S}) / \sqrt{S})$ & $O(n\sqrt{S})$\\
		\rowcolor[gray]{0.9}
		Section \ref{sec:B} & $S \in [n^{4/3},n^2]$ & $O(n \log(n/\sqrt{S}) / \sqrt{S})$ & $O(S\log n)$\\
		\rowcolor[gray]{0.9}
		Section \ref{sec:C} & $S \in [n^{4/3},n^2]$ & $O(n / \sqrt{S})$ & $O((S^{3/2} / \sqrt{n}) \log^2 n)$\\
		\cite[(\S4)]{D96} & $S \in [n, n^{3/2}]$ & $O(n^2/S)$ & $O(n\sqrt{S})$\\
		\cite{FR06}+\cite{K05} & $S = n \log n$ & $O(\sqrt{n}\log^2 n)$ & $O(n\log^2 n)$\\
		\cite{F87} & $S = n$ & $O(n)$ & $O(n\log n)$\\
		\cite{HKRS97} & $S = n$ & $O(n)$ & ---\\
		\rowcolor[gray]{0.9}
		Section \ref{sec:A} & $S = n$ & $O(n^{1/2+\varepsilon})$ & $O(n \log n)$\\
		\hline
	\end{tabular}
	\caption{Comparison of distance query data structures for planar graphs. Time bounds are expressed as a function of the storage space. The data structures are ordered by decreasing storage space and then by decreasing query time.}
	\label{table:sum}
\end{table}

\section{Preliminaries} \label{sec:pre}

We consider a directed simple planar graph $G(V,E)$. We let $n = |V(G)|$, and by Euler's formula $|E(G)| = O(n)$. We assume that $G$ is given with a fixed planar embedding, in other words it is a \emph{plane graph}. Without loss of generality we assume that $G$ is a triangulated, bounded degree graph; this assumption is required by algorithms that we use and are described in Sect.~\ref{sec:decomp} and \ref{sec:dense}. We assume that $G$ is connected, since we can handle each connected component separately.

Every edge in $E(G)$ has a non-negative \emph{length}. The length of a \emph{path} is the sum of lengths of all of its edges. The \emph{distance} from a vertex $u$ to a vertex $v$ is the minimum length of a path from $u$ to $v$. With additional $O(n\log^2n/\log\log n)$ preprocessing time we can allow negative edge lengths as well, see \cite{FR06,KMW10,MW10} for details.

Let $F, H$ be subgraphs of $G$. We write $d_H(u,v)$ to denote the distance from $u$ to $v$ in $H$. The graph $F \cap H$ is the subgraph induced by $E(F) \cap E(H)$. For short we denote $|H| = |V(H)|$.

\subsection{Decomposition} \label{sec:decomp}

A \emph{decomposition} of a planar graph $G$ is a set of subgraphs of $G$, such that each edge is in exactly one subgraph and each vertex of $G$ is in at least one subgraph. Each of the subgraphs which define the decomposition is called a \emph{piece}.

A vertex $v$ is a \emph{boundary vertex} of a piece $B$, if $v \in V(B)$ and $v$ is incident to some edge not in $E(B)$. The set of all boundary vertices of $B$ is the \emph{boundary} of $B$, denoted by $\partial B$. A \emph{hole} is a face of $B$ (including the external face) that is not a face of $G$. For a hole $H$ we denote by $H$ also the subgraph of $G$ inside $H$. A \emph{boundary walk} of $B$ is a facial walk of $B$ around a hole $H$. For a piece $B$ with hole $H$ we denote $\partial B[H] = \partial B \cap V(H)$. A vertex of $V(B) \setminus \partial B$ is an \emph{internal vertex} of $B$.

All distance query data structures mentioned in the introduction decompose the planar graph. They take advantage of the fact that a path can go from one piece to another only through boundary vertices.

A \emph{recursive decomposition} \cite{FR06} is obtained by starting with $G$ itself being the only piece in level 0 of the decomposition. At each level, we split each piece $B$ with $|B|$ vertices and $|\partial B|$ boundary vertices that has more than one edge into two pieces, each with at most $2|B|/3$ vertices and at most $2|\partial B|/3 + O(\sqrt{|B|})$ boundary vertices. We require that the boundary vertices of a piece $B$ are also boundary vertices of the subpieces of $B$. Each piece $B$ in the decomposition has $O(\sqrt{|B|})$ boundary vertices.

An \emph{$r$-decomposition} \cite{F87} is a decomposition of the graph into $O(n/r)$ pieces, each of size at most $r$ with $O(\sqrt{r})$ boundary vertices.

Fakcharoenphol and Rao \cite{FR06} showed how to find a recursive decomposition of $G$, such that each piece is connected and has at most a constant number of holes. They use these two properties for their distance algorithm. The construction of the decomposition takes $O(n \log n)$ time using $O(n \log n)$ space, and is done by recursively applying the separator algorithm of Miller \cite{M86}. Frederickson \cite{F87} showed how to find an $r$-decomposition in $O(n \log n)$ time and $O(n)$ space by recursively applying the separator algorithm of Lipton and Tarjan \cite{LT79}. Thus, an $r$-decomposition is a limited type of recursive decomposition where we stop the recursion earlier (when we get to pieces of size $r$), and do not store all the levels of the recursion (we store only the leaves). Cabello \cite{C06} combined the two constructions of \cite{FR06,F87} (using \cite{M86} instead of \cite{LT79}) and constructed an $r$-decomposition with the properties that the number of holes per piece is bounded by a constant, and that each piece is connected.

In Sect.~\ref{sec:A} we use a combination of recursive decomposition and $r$-decomposi\-tion -- we decompose the graph recursively, but we decompose each piece into $O(n/r)$ pieces instead of 2. In Sect.~\ref{sec:B} we use $r$-decomposition. In Sect.~\ref{sec:C} we use $r$-decomposition as well, there we take advantage of the fact that the construction of an $r$-decomposition is the same as of a recursive decomposition, which was stopped earlier.

\subsection{The Dense Distance Graph} \label{sec:dense}

Fakcharoenphol and Rao \cite{FR06} define the \emph{dense distance graph} of a recursive decomposition. For each piece $B$ in the recursive decomposition they add a piece to the dense distance graph that contains the vertices of $\partial B$ and for every $u, v \in \partial B$ an edge from $u$ to $v$ whose length is $d_B(u, v)$. The multiple-source shortest paths algorithm of Klein \cite{K05} finds $k$ distances where the sources of all of them are on the same face in $O((k + n) \log n)$ time. Therefore, using \cite{K05} it takes $O(|B| \log |B|)$ time to find the part of the dense distance graph that corresponds to a piece $B$ (recall that $|\partial B| = O(\sqrt{|B|})$ and $B$ has a constant number of holes). It thus takes $O(n \log^2 n)$ time to construct the dense distance graph over all pieces of the recursive decomposition.

Every single edge defines a piece in the base of the recursion, so it is clear that the distance from $u$ to $v$ in the dense distance graph is the same as the distance between these two vertices in the original graph. Fakcharoenphol and Rao noticed that in order to find the distance from $u$ to $v$ we do not have to search the entire dense distance graph, but that it suffices to consider only edges that correspond to shortest paths between boundary vertices in a limited number of pieces. The pieces are these containing either $u$ or $v$, and their siblings in the recursive decomposition. There are $O(\log n)$ such pieces with a total of $O(\sqrt{n})$ boundary vertices. Fakcharoenphol and Rao gave an implementation of Dijkstra's algorithm that runs over a subgraph of the dense distance graph with $q$ vertices, defined by a partial set of the pieces in the recursive decomposition, in $O(q \log^2 n)$ time. This gives the $O(\sqrt{n} \log^2 n)$ query time of their data structure.

We use dense distance graphs in two of our data structures (Sect.~\ref{sec:A} and \ref{sec:C}). In both cases it is on a variant of recursive decomposition, as discussed above.

\subsection{The Monge Property} \label{sec:Monge}

A $p \times q$ matrix $M$ satisfies the \emph{Monge property} if for every two rows $i \leq k$ and two columns $j \leq \ell$, $M$ satisfies $M_{ij} + M_{k\ell} \leq M_{i\ell} + M_{kj}$. We can find the minimum element of $M$ by transposing, negating and reversing $M$, and using the SMAWK algorithm \cite{AKM+87} for row-maxima on the resulting totally monotone matrix. If we do not store $M$ explicitly, but are able to retrieve each entry in $O(1)$ time this takes $O(p+q)$ time. Note that if we add a constant to an entire row or to an entire column of a matrix with the Monge property, then the property remains.

Consider two disjoint sets $X$ and $Y$ of consecutive boundary vertices on a boundary walk of some piece $B$. Rank the vertices of $X$ from $x_1$ to $x_{|X|}$ according to their order around the boundary walk, and rank the vertices of $Y$ from $y_1$ to $y_{|Y|}$ according to their order in the opposite direction around the boundary walk. For $i \leq k$ and $j \leq \ell$, the shortest path from $x_i$ to $y_\ell$ inside $B$ and the shortest path from $x_k$ to $y_j$ inside $B$ must cross each other. Let $w$ be a vertex common to both paths. Then, $d_B(x_i,y_j) + d_B(x_k,y_\ell) \leq d_B(x_i,w) +d_B(w,y_j) + d_B(x_k,w) + d_B(w,y_\ell) \leq d_B(x_i,y_\ell) + d_B(x_k,y_j)$ (see Fig.~\ref{fig:monge}). Therefore, the matrix $M$ such that $M_{ij} = d_B(x_i,y_j)$ has the Monge property. The Monge property was first used explicitly for distance queries in planar graphs by \cite{FR06}.

\begin{figure}
	\centering
	\scriptsize
	\begin{minipage}{0.4\linewidth}
		\centering
		\includegraphics[scale=0.5]{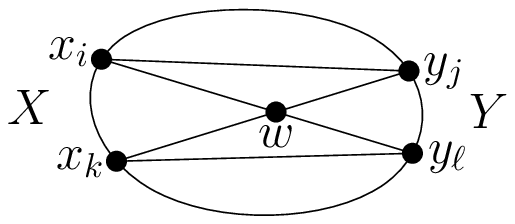}
		\caption{The distances from $X$ to $Y$ satisfy the Monge property}
		\label{fig:monge}
	\end{minipage}
	\qquad
	\begin{minipage}{0.4\linewidth}
		\begin{equation*}
			\centering
			\begin{bmatrix}
				1 & 2 & 3 & 4 \\
				& 5 & 6 & 7 \\
				& & 8 & 9 \\
				& & & 0 
			\end{bmatrix}
			\quad
			\begin{bmatrix}
				1 & & & \\
				2 & 3 & & \\
				4 & 5 & 6 & \\
				7 & 8 & 9 & 0 
			\end{bmatrix}
		\end{equation*}
		\caption{A falling staircase matrix (left) and an inverse falling staircase matrix (right)}
		\label{fig:stair}
	\end{minipage}
\end{figure}

A \emph{partial matrix} is a matrix that may have some blank entries. In a \emph{falling staircase matrix} the non-blank entries are consecutive in each row starting not before the first non-blank entry of the previous row and ending at the end of the row (see Fig.~\ref{fig:stair}), \emph{inverse falling staircase matrix} is defined similarly by exchanging the positions of the non-blanks and the blanks. Aggarwal and Klawe \cite{AK90} find the minimum of an (inverse) falling staircase matrix whose non-blank entries satisfy the Monge property in $O(q+p)$ time by filling the blanks with large enough values to create a Monge matrix.

In Sect.~\ref{sec:C} we use this tool for finding the minimum of two staircase matrices whose non-blank entries satisfy the Monge property.

\section{Linear-Space Data Structure} \label{sec:A}

In this section we present a data structure with linear space, almost linear preprocessing time, and query time faster than any previous data structure of linear space. We generalize the data structure of Fakcharoenphol and Rao \cite{FR06} by combining recursive decomposition of the graph with $r$-decomposition. This is similar to the way that Mozes and Wulff-Nilsen \cite{MW10} improved the shortest path algorithm of Klein, Mozes and Weimann \cite{KMW10}. Mozes and Sommer \cite{MS10} have independently obtained a similar result.

We find an $r$-decomposition of $G$ into $p = O(n / r)$ pieces, and then we recursively decompose each piece into $p$ subpieces, until we get to pieces with a single edge. The depth of the decomposition is $O(\log n/\log p)$ where at level $i$ we have $p^i$ pieces, each of size $O(n / p^i)$ and with $O(\sqrt{n / p^i})$ boundary vertices. Constructing this recursive decomposition takes $O(n \log n \frac{\log n}{\log p})$ time. An alternative way to describe this decomposition is to perform a recursive decomposition on $G$ while storing only levels $k \log p$ for $k = 0, \dots, \lfloor \log n / \log p \rfloor$ of the recursion tree and the leaves of the recursion (the pieces containing single edges).

We compute the dense distance graph for the recursive decomposition, in the same way as in the data structure of \cite{FR06}. That is, we compute the distance between every pair of boundary vertices in each piece. Using the algorithm of Klein \cite{K05} this takes $O(n \log n)$ time for each level, and a total of $O(n \log n \frac{\log n}{\log p})$ time. The size of dense distance graph over our recursive decomposition is $O(n \frac{\log n}{\log p})$.

When a distance query from $u$ to $v$ arrives, we use the Dijkstra implementation of \cite{FR06} to answer it. We run the algorithm on the subgraph of the dense distance graph that includes all the pieces that contain either $u$ or $v$, and the $p-1$ siblings in the recursive decomposition of each such piece. We require the sibling pieces because the shortest path can get out of a piece $B$ into a sibling of $B$ without getting out of any piece that contains $B$. Therefore, the number of boundary vertices involved in each distance query is $O(\sqrt{n}+\sum_{i=1}^{\log n / \log p} p\sqrt{n/p^i}) = O(p\sqrt{n})$. Hence the query time using the algorithm of \cite{FR06} is $O(p\sqrt{n}\log^2 n)$.

We conclude that
for a planar graph with $n$ vertices and any $p \in [2, n]$, we can construct in $O(n \log^2 n / \log p)$ time a data structure of size $O(n \log n / \log p)$ that computes the distance between any two vertices in $O(p\sqrt{n}\log^2 n)$ time.
If we set $p = 2$ we get exactly the data structure of \cite{FR06}. If we set $p = n^\delta$ for a constant $0 < \delta < \varepsilon$ we get:

\begin{theorem} \label{col:A}
  For a planar graph with $n$ vertices and any constant $\varepsilon > 0$, we can construct in $O(n \log n)$ time a data structure of size $O(n)$ that computes the distance between any two vertices in $O(n^{1/2+\varepsilon})$ time.
\end{theorem}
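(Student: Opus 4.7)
The plan is to invoke the general tradeoff established just before the theorem statement, namely that for any $p \in [2,n]$ the construction yields a data structure of size $O(n \log n / \log p)$, preprocessing time $O(n \log^2 n / \log p)$, and query time $O(p \sqrt{n} \log^2 n)$, and then pick $p$ so that the logarithmic factors are dominated by the $n^{\varepsilon}$ slack in the query time while simultaneously killing the extra $\log n$ factor in the space and preprocessing bounds.

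Concretely, given the constant $\varepsilon > 0$ from the statement, I would fix any constant $\delta$ with $0 < \delta < \varepsilon$ and set $p = n^{\delta}$. Then $\log p = \delta \log n$, so substituting into the generic bounds immediately gives space $O(n \log n / (\delta \log n)) = O(n)$ and preprocessing time $O(n \log^2 n / (\delta \log n)) = O(n \log n)$, both with constants depending on $\delta$ (hence on $\varepsilon$), which is allowed since $\varepsilon$ is a constant.

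For the query time the substitution yields $O(n^{\delta} \sqrt{n} \log^2 n) = O(n^{1/2 + \delta} \log^2 n)$. The only thing to check is that the $\log^2 n$ factor fits inside the $n^{\varepsilon}$ slack: since $\delta < \varepsilon$, we have $\log^2 n = o(n^{\varepsilon - \delta})$, so $n^{1/2 + \delta} \log^2 n = O(n^{1/2 + \varepsilon})$, as required.

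The only mild subtlety is making sure $p = n^{\delta}$ is a legal choice of parameter, i.e., that it lies in $[2,n]$ for all sufficiently large $n$ (trivially true) and that the recursive decomposition construction and dense distance graph analysis cited in the excerpt remain valid when $p$ grows with $n$ rather than being a fixed constant; the bounds quoted immediately above the theorem are already stated for all $p \in [2,n]$, so no further work is needed. There is no real obstacle to overcome — the theorem is essentially a corollary of the tradeoff formula, and the proof reduces to choosing $\delta < \varepsilon$ and verifying the three bounds by direct substitution.
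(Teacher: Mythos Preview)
Your proposal is correct and matches the paper's own argument essentially verbatim: the paper also sets $p = n^{\delta}$ for a constant $0 < \delta < \varepsilon$ in the general tradeoff and reads off the three bounds. There is nothing to add.
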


The total time for $k$ distance queries is $O(n \log n + kn ^ {1/2+\varepsilon})$ and the required space is $O(n + k)$. This improves the fastest time for $k$ distance queries for $k = O(n^{1/2-\varepsilon})$ and $k = \omega(\log n)$ simultaneously. Among data structures that require only $O(n + k)$ space, the upper bound on the range of $k$ is $O(n^{5/6-\varepsilon})$.

Fakcharoenphol and Rao \cite{FR06} noted that Smith suggested that their algorithm can be generalized to graphs of bounded genus. If a graph $G$ with bounded vertex degree is embedded in an orientable surface of genus $g$, then \cite{DV95,HM86} showed how to find a \emph{planarizing set} of $O(\sqrt{ng})$ edges whose removal from the graph makes the graph planar, in $O(n + g)$ time. We use the planarizing set for the first decomposition of the graph, and combine the Dijkstra implementation of \cite{FR06} with standard implementation using a heap for the topmost pieces in the recursion. We get that the bounds of Theorem \ref{col:A} apply also to graphs embedded in an orientable surface of a fixed genus.

\section{Improved Preprocessing Time for $S \in [n^{4/3}, n^2]$} \label{sec:B}

In this section we present a data structure that matches the space-query time tradeoff of the data structures of Djidjev \cite[(\S5)]{D96} and Chen and Xu \cite{CX00} with the preprocessing time of the data structure of Cabello \cite{C06}. Our data structure combines parts of the data structures of \cite[(\S5)]{D96} and of \cite{C06}.

First, we construct an $r$-decomposition of $G$ in $O(n \log n)$ time, for some parameter $r \in (0,n)$. For each piece $B$ our data structure has three parts:
\begin{enumerate}[(i)]
% Give labels
	\item The distances $d_G(u, v)$ and $d_G(v, u)$ for every $u \in \partial B$ and $v \in V(B)$.
	\item A data structure that reports $d_B(u, v)$ in $O(\sqrt{r})$ time for $u, v \in V(B)$.
	\item For each hole $H$ of $B$ we store $d_H(u, v)$ for every $u \in \partial B[H]$ and $v \in V(H)$ such that $v$ is a boundary vertex of some piece contained in $H$.
\end{enumerate}

% The references below can be replaced with \ref's
Part (i) is from the data structure of Cabello \cite{C06}. The construction of this part requires $O(n \log n + r^{3/2})$ time and $O(n + r^{3/2})$ space per piece \cite{C06}.
Part (ii) was used both by Djidjev \cite[(\S5)]{D96} and by Cabello \cite{C06}. This is the data structure of \cite[(\S3)]{ACC+96,D96} with $S = r^{3/2}$, its construction takes $O(r^{3/2})$ time and space per piece.
Part (iii) is from the data structure of \cite[(\S5)]{D96}, but we construct it more efficiently. We find the distances for this part using the multiple-source shortest paths algorithm of Klein \cite{K05} for every boundary walk. The required space per piece for part (iii) is $O(n)$ and the preprocessing time is $O(n \log n)$.

Since there are $O(n/r)$ pieces, each with a constant number of holes and $O(\sqrt{r})$ boundary vertices, constructing the three parts takes $O((n^2/r)\log n + n\sqrt{r})$ time and $O(n^2/r + n\sqrt{r})$ space.

Let $u, v$ be a query pair. We use the data structure of this section to find $d_G(u, v)$ in $O(\sqrt{r} \log r)$ time. If $u$ and $v$ are in the same piece then we find the distance from $u$ to $v$ using parts (i) and (ii) of the data structure with the query algorithm of \cite{C06} in $O(\sqrt{r})$ time (see details in Sect.~\ref{sec:Cq} below). If $u$ and $v$ are in different pieces then we find the distance using parts (i) and (iii) with the query algorithm of \cite[(\S5)]{D96} in $O(\sqrt{r} \log r)$ time (see details in Appendix \ref{apx:B}).

We conclude that
for a planar graph with $n$ vertices and any $r \in (0, n)$, we can construct in $O((n^2 / r) \log n + n \sqrt{r})$ time a data structure of size $O(n^2 / r + n\sqrt{r})$ that computes the distance between any two vertices in $O(\sqrt{r}\log r)$ time.
The sum $n^2 / r + n\sqrt{r}$ minimizes at $n^{4/3}$, and for $r = n^2 / S$ we get:
\begin{theorem}
  For a planar graph with $n$ vertices and $S \in [n^{4/3},n^2]$, we can construct in $O(S \log n)$ time a data structure of size $O(S)$ that computes the distance between any two vertices in $O(n \log(n/\sqrt{S}) / \sqrt{S})$ time.
\end{theorem}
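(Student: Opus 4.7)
The plan is essentially parameter substitution: the paragraphs preceding the theorem have already established that, for any $r \in (0,n)$, the three-part data structure (parts (i)-(iii)) has space $O(n^2/r + n\sqrt{r})$, preprocessing time $O((n^2/r)\log n + n\sqrt{r})$, and query time $O(\sqrt{r}\log r)$. To derive the theorem I would choose $r = n^2/S$ and simplify. First I would check admissibility of this choice: for $S \in [n^{4/3}, n^2]$ one has $r = n^2/S \in [1, n^{2/3}]$, which is a subset of $(0,n)$, so the parameterized construction applies.

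Next I would substitute. The space becomes $O(S + n^2/\sqrt{S})$ and the preprocessing becomes $O(S\log n + n^2/\sqrt{S})$. The key inequality to verify is that $n^2/\sqrt{S}$ is dominated by $S$ on the regime $S \geq n^{4/3}$: since $S^{3/2} \geq (n^{4/3})^{3/2} = n^2$, we have $n^2/\sqrt{S} \leq S$. This collapses the space bound to $O(S)$ and the preprocessing bound to $O(S\log n)$. For the query time, I would use $\log(n^2/S) = 2\log(n/\sqrt{S})$ to write $\sqrt{r}\log r = (n/\sqrt{S})\cdot 2\log(n/\sqrt{S}) = O((n/\sqrt{S})\log(n/\sqrt{S}))$.

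There is no real obstacle at this stage; the substantive work has already been done in constructing parts (i)-(iii) and analyzing their per-piece costs (using the $r$-decomposition, Cabello's construction of part (i), the data structure of \cite[(\S3)]{ACC+96,D96} with $S = r^{3/2}$ for part (ii), and Klein's multiple-source shortest paths algorithm for part (iii)), together with the query algorithm that splits on whether $u$ and $v$ lie in the same piece. Given those ingredients, the only thing left for the theorem itself is the trivial inequality $n^2/\sqrt{S} \leq S$ coming from the hypothesis $S \geq n^{4/3}$.
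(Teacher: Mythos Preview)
Your proposal is correct and takes essentially the same approach as the paper: the paper simply states that the sum $n^2/r + n\sqrt{r}$ minimizes at $n^{4/3}$ and then substitutes $r = n^2/S$, leaving the verification of the resulting bounds implicit. Your write-up actually supplies more detail than the paper does (the admissibility check on $r$, the inequality $n^2/\sqrt{S} \le S$ from $S \ge n^{4/3}$, and the rewriting $\log(n^2/S) = 2\log(n/\sqrt{S})$), all of which are exactly the computations the paper is tacitly invoking.
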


\section{Improved Query Time for $S \in [n^{4/3}, n^2]$} \label{sec:C}

In this section we present a data structure with an improved query time, for the same range of space bounds as in the previous section. In return, the preprocessing time is higher. For this purpose we use minimum search in Monge matrices. While previous planar distance data structures have taken advantage of the Monge property before, this is the first to use fast minimum search in a Monge matrix with the SMAWK algorithm \cite{AKM+87}.

Again, we construct an $r$-decomposition of $G$. Assume that we want to find the distance from a vertex $u$ to a vertex $v$ that are in two different pieces. Let $B$ and $B'$ be the different pieces that contain $u$ and $v$ respectively, let $H$ and $H'$ be the holes of $B$ and $B'$ that contain $v$ and $u$ respectively, and let $X = \partial B[H]$ and $Y = \partial B'[H']$. Let $J = H \cap H'$ be the subgraph of $G$ contained both in $H$ and in $H'$. We assume without lost of generality that $J$ contains the infinite face. See Fig.~\ref{fig:subgraphJ}.

\begin{figure}
	\centering
	\includegraphics[scale=0.5]{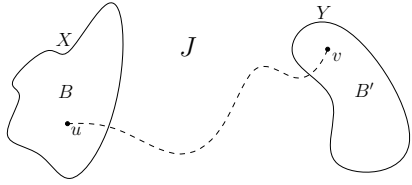}
	\caption{The setting in Sect.~\ref{sec:C}. The vertex $u$ is in the piece $B$, and $v$ is in $B'$. The subgraph $J$ is between $B$ and $B'$. Every path from $u$ to $v$ must contain a vertex from the boundary walks $X$ of $B$ and a vertex from the boundary walk $Y$ of $B'$.}
	\label{fig:subgraphJ}
\end{figure}

The shortest path from $u$ to $v$ must contain a vertex $x \in X$ and a vertex $y \in Y$ (it is possible that $x = y$). We assume that there is no internal vertex of $B$ or $B'$ between $x$ and $y$ in this path, since otherwise we can replace $x$ with a later vertex. Therefore, $d_G(u, v) = \min_{x \in X, y \in Y} \{d_G(u, x) + d_J(x, y) + d_G(y, v)\}$.

Our goal then is to find $x, y$ that minimize $d_G(u, x) + d_J(x, y) + d_G(y, v)$. For a particular order of $X$ and of $Y$, which we specify below, let $M$ be the matrix such that $M_{ij} = d_J(x_i, y_j)$, and $N$ be the matrix such that $N_{ij} = d_G(u, x_i) + d_J(x_i, y_j) + d_G(y_j, v)$. We show how to order the members of $X$ and $Y$ such that $M$ decomposes into two staircase matrices, each with the Monge property. Since $d_G(u, x_i)$ is fixed for a fixed $x_i$, and $d_G(y_j, v)$ is fixed for a fixed $y_j$, then $N$ also consists of two staircase matrices with the Monge property. Thus we can use the algorithm of Aggarwal and Klawe \cite{AK90} to find the minimum entry of $N$, which is the desired distance.

For every $x \in X$ we define the \emph{leftmost shortest path from $x$ to $Y$}, denoted by $L(x)$ as follows. We add to the embedding of $J$ a vertex $u'$ inside $B$ and connect it with an edge to $x$, and a vertex $v'$ inside $B'$ and connect it with edges from every vertex of $Y$ (recall the internal vertices of $B$ and $B'$ are not in $J$). We set the length of all new edges to be $0$. An edge $e = (w,z) \in E(J)$ is called \emph{tight} if the length of $e$ is equal to $d_J(u',z) - d_J(u',w)$, that is if $e$ is on some shortest path from $u'$ to $z$. We remove all non-tight edges from the graph, and perform a \emph{left-first search} from $u'$ until we find $v'$ (i.e. we perform a depth-first search from $u'$, and visit the edges outgoing from a specific vertex according to their left-to-right order, see also \cite{K05}). Let $L(x)$ be the path we obtain by removing the first and the last edges of the leftmost path we found from $u'$ to $v'$, and let $\ell(x) \in Y$ be the last vertex of $L(x)$. Note that $L(x)$ is a shortest path from $x$ to $\ell(x)$. The reason we added $u'$ is to decide between two paths that diverge at $x$ itself, the reason we added $v'$ is to decide between two paths such that one is a prefix of the other. Note that $L(x)$ may contain more than one vertex of $Y$. Moreover, even if $x \in X \cap Y$ it is not necessarily true that $\ell(x) = x$. See Fig.~\ref{fig:leftfs}.

\begin{figure}
	\centering
	\includegraphics[scale=0.75]{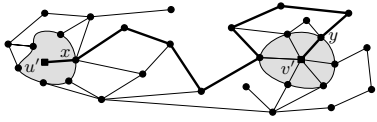}
	\caption{Finding $L(x)$. Only tight edges of $J$ are shown, $B$ and $B'$ are \emph{shaded}. The leftmost path from $u'$ to $v'$ is \emph{bold}, $y = \ell(x)$, $L(x)$ is its subpath between $x$ and $y$. Directions of the edges are not shown.}
	\label{fig:leftfs}
\end{figure}

Fix some arbitrary vertex of $X$ to be $x_1$ and rank the other vertices of $X$ in a clockwise order. Let $y_{|Y|} = \ell(x_1)$, and rank the vertices of $Y$ in a counterclockwise order. Let $P = L(x_1)$.

\begin{lemma} \label{lem:C}
	Let $x_i \in X$ and $y_j \in Y$. There is a shortest path $Q$ in $J$ from $x_i$ to $y_j$ such that either $Q$ does not cross $P$, or every prefix of $Q$ crosses $P$ from the left side of $P$ to its right side at most once more than it crosses $P$ from its right side to its left side.
\end{lemma}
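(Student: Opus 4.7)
\medskip

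\noindent\textbf{Proof plan.} The plan is to take $Q$ to be a shortest path from $x_i$ to $y_j$ in $J$ that, among all such shortest paths, minimizes the number of vertices at which $Q$ \emph{crosses} $P$ in the planar embedding (so that, locally, the edges of $Q$ and $P$ alternate around the vertex rather than merely touch). If this minimum is zero, the first alternative of the lemma holds and we are done, so we may assume $Q$ crosses $P$ at least once. The remainder of the argument shows that the second alternative must then hold, by a swap argument against the choice of $Q$.

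Define, for each initial segment of $Q$, the discrepancy $f$ = (number of left-to-right crossings of $P$) $-$ (number of right-to-left crossings of $P$). This is a step function along $Q$ that changes only at crossings. Suppose for contradiction that some prefix of $Q$ has $f \geq 2$. Let $w$ be the first vertex along $Q$ at which $f$ reaches the value $2$; then $w \in P\cap Q$ and the step through $w$ is a left-to-right crossing. Going backwards along $Q$, let $a$ be the most recent vertex at which $f$ jumps from $0$ to $1$ via a left-to-right crossing; such an $a$ must exist because $f$ starts at $0$ and equals $1$ just before $w$. By the choice of $a$ and $w$, the discrepancy equals $1$ throughout the open subpath $Q(a,w)$, so between $a$ and $w$ the left-to-right and right-to-left crossings come in balanced pairs. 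In particular, $a \in P\cap Q$.

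Now use that $Q$ and $P$ are both shortest paths: the subpaths $Q[a,w]$ and $P[a,w]$ are each shortest paths from $a$ to $w$ in $J$, hence have equal length. Replace $Q[a,w]$ by $P[a,w]$ to obtain a new walk $Q' = Q[x_i,a]\cdot P[a,w]\cdot Q[w,y_j]$, which is still a shortest $x_i$-to-$y_j$ path. By construction, $Q'$ coincides with $P$ on the stretch from $a$ to $w$, so on this stretch it contributes no crossings with $P$ (a path does not cross itself). At $a$, the walk $Q'$ arrives from the left side of $P$ (since the crossing at $a$ in $Q$ was left-to-right) and then follows $P$; locally the edges of $Q'$ and $P$ do not alternate around $a$, so $a$ is a \emph{touch} rather than a crossing. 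The same analysis applies at $w$. Therefore $Q'$ has at least two fewer crossings with $P$ than $Q$ does (the crossings at $a$ and $w$, plus any balanced pairs strictly between them), contradicting the minimality of $Q$.

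\medskip

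\noindent\textbf{Main obstacle.} The only substantive difficulty is the planar bookkeeping: formally justifying that after the swap the vertices $a$ and $w$ are \emph{touches} and not crossings. This needs a careful look at the cyclic order of edges around $a$ and around $w$ in the embedding, using that just before $a$ and just after $w$ the path $Q'$ lies on the same (left resp.\ right) side of $P$ as $Q$ did, while between $a$ and $w$ the path $Q'$ coincides with $P$. Once this local picture is verified, the swap strictly reduces the crossing count and the contradiction is immediate; note that the argument does not need the leftmost-ness of $P$, only that $P$ is a shortest path separating $X\setminus\{x_1\}$ from $Y\setminus\{y_{|Y|\}}$, so the asymmetry in the statement comes from the fact that $x_i$ lies on the left side of $P$ while $y_j$ lies on the right.
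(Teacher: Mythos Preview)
Your minimality-plus-swap strategy is a natural alternative to the paper's proof, but the swap step has a real gap that you do not address. You want to replace $Q[a,w]$ by $P[a,w]$ and keep a shortest $x_i$--$y_j$ path. For this you need $P[a,w]$ to be a shortest path from $a$ to $w$, hence you need $a$ to \emph{precede} $w$ along $P$; the graph is directed, so the reverse subpath $P[w,a]$ is useless even if it exists. Nothing in your argument establishes this order on $P$: you only know that $a$ precedes $w$ along $Q$. In fact, look at the local picture you set up: at $a$ the path $Q$ leaves $P$ to the right, there are no crossings on $Q(a,w)$, and at $w$ the crossing is again left-to-right, so $Q$ must approach $w$ from the \emph{left} of $P$. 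Reconciling ``leaves to the right, arrives from the left, without crossing $P$'' is exactly the topological issue, and it is where the paper spends its effort (its Observation~3 argues that two consecutive same-side crossings force $Q$ to cross itself, after first normalising $Q$ via Observation~2 so that any two meetings with $P$ in increasing $P$-order coincide with $P$). Your write-up flags the touch-versus-cross bookkeeping at $a$ and $w$ as the obstacle, but the harder point is this order-on-$P$ issue, which you skip entirely.

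Separately, your closing remark that ``the argument does not need the leftmost-ness of $P$'' and that ``$x_i$ lies on the left side of $P$ while $y_j$ lies on the right'' is not justified. The paper's proof uses leftmost-ness in an essential way (its Observation~1: the first emanation of $Q$ from $P$ is to the right, because otherwise one could splice to produce a path left of $P$, contradicting the definition of $L(x_1)$). Whether a purely symmetric crossing-minimisation argument can avoid leftmost-ness depends on the topology of $J$ (which need not be a simple annulus, since pieces may have several holes); you would have to argue this, and you do not. As written, the proof plan is incomplete: either argue directly that two consecutive left-to-right crossings are impossible in $J$ (a planarity argument, essentially the paper's Observation~3), or show why $a$ must precede $w$ along $P$ so that the swap is legal; and drop or justify the claim about leftmost-ness.
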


\begin{proof}
Assume that every shortest path from $x_i$ to $y_j$ in $J$ crosses $P$, and let $Q$ be such a path. First, we can assume that at the first time that $Q$ emanates from $P$ it emanates from its right side -- if $Q$ emanates from the left side of $P$ at a vertex $w$, we can replace the suffix of $P$ that starts at $w$ with the suffix of $Q$ and get a path from $x_1$ to $y_j$ which is to the left of $P$, contradicting its definition as $L(x)$ (see Fig.~\ref{fig:QxP}(a)). Second, we may assume that if $Q$ meets $P$ at a vertex $w$, and then again at a vertex $w'$ such that $w'$ is after $w$ also in $P$, then the subpath of $Q$ between $w$ and $w'$ is the same subpath as in $P$, since otherwise we can replace the subpath of $Q$ between $w$ and $w'$ with the subpath of $P$ (see Fig.~\ref{fig:QxP}(b)). Last we may assume that in two consecutive times that $Q$ crosses $P$ it does so from different directions, since if the same direction is repeated twice, and by the previous observation the second crossing precedes the first crossing in their order in $P$, then $Q$ must cross itself (see Fig.~\ref{fig:QxP}(c)). From these three observations the lemma follows.
\end{proof}

\begin{figure}[t]
	\centering
	\subfigure[If $Q$ emanates left from $P$ at $w$ then we can replace the suffix of $P$ that starts at $w$ with a suffix of $Q$.]{\includegraphics[scale=0.5]{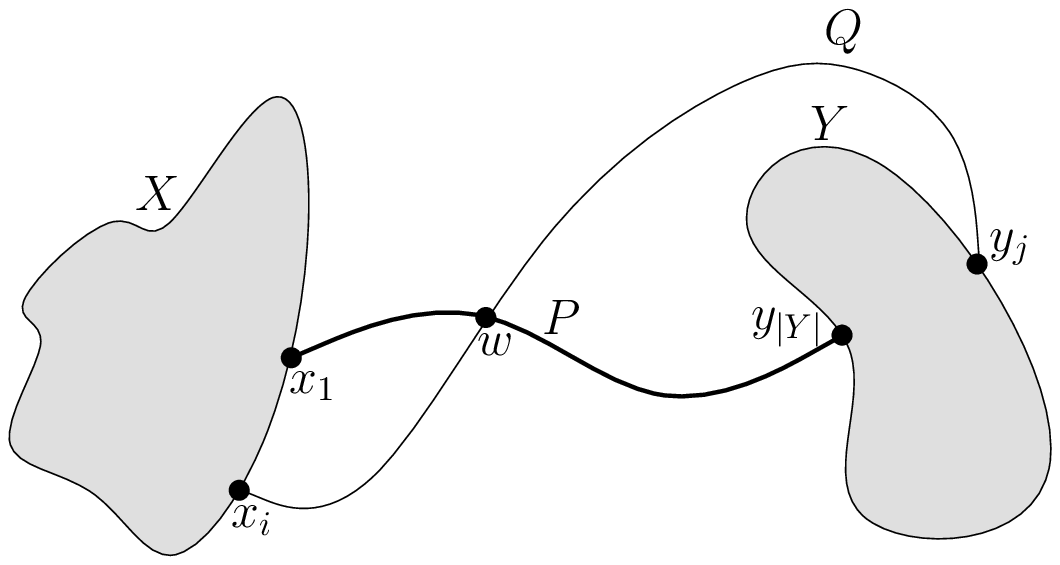}} \quad
	\subfigure[If $P$ and $Q$ contain $w$ and $w'$ in the same order then we can replace the subpath of $Q$ between $w$ and $w'$ with the subpath of $P$.]{\includegraphics[scale=0.5]{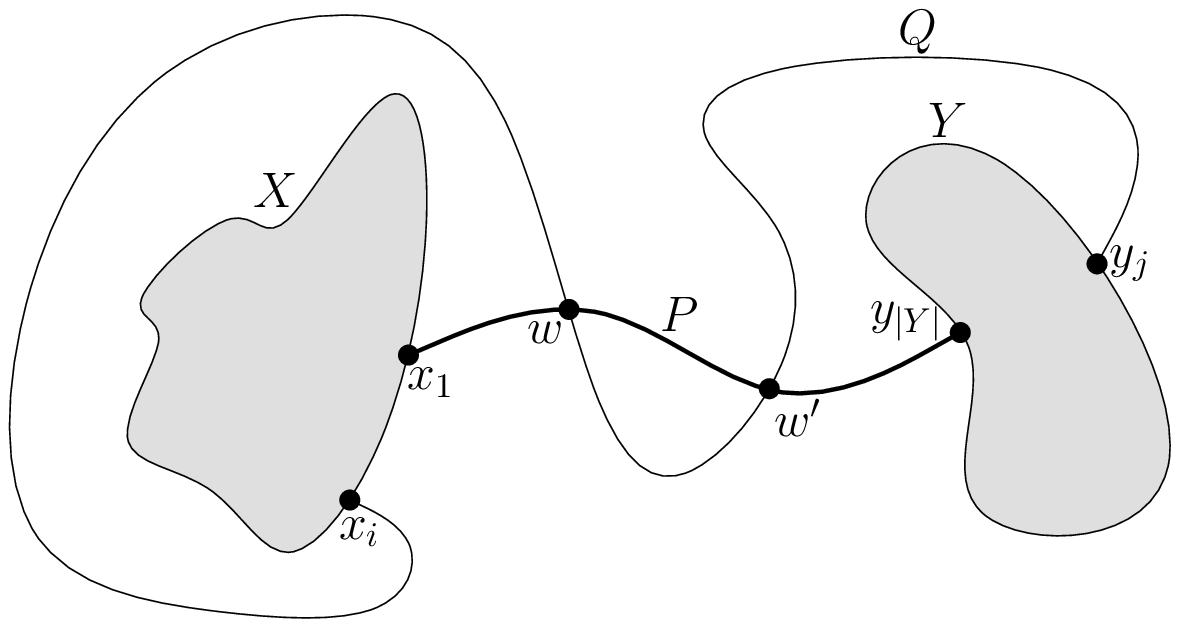}} \\
	\subfigure[If $Q$ crosses $P$ twice consecutively from the same side, then it must cross itself (only the first time is illustrated at $w'$).]{\includegraphics[scale=0.5]{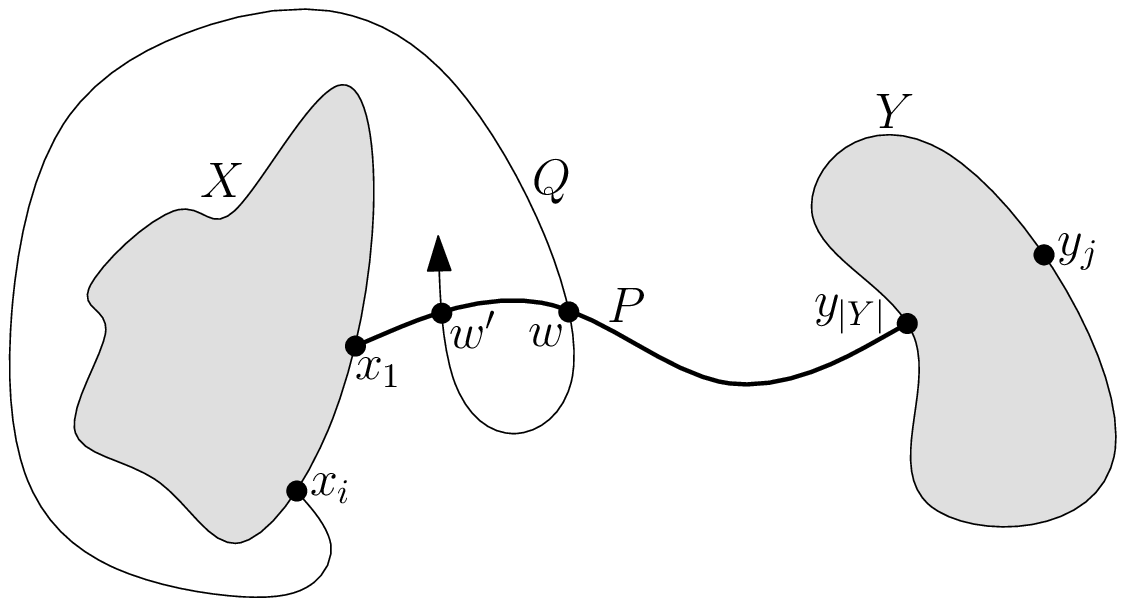}} \quad
	\subfigure[If a shortest path from $X$ to $Y$ does not cross $P$ (\emph{solid lines}), or always cross $P$ (\emph{dashed} lines), then $M$ has the Monge property (cf.\ Fig.~\ref{fig:monge}).]{\includegraphics[scale=0.5]{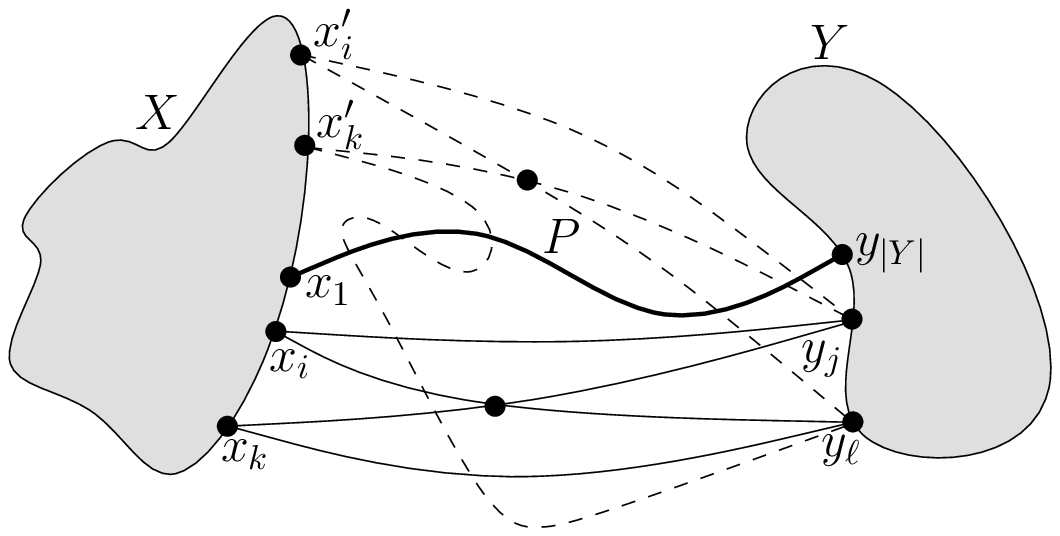}}
	\caption{Crossing of $Q$ and $P$.}
	\label{fig:QxP}
\end{figure}

Let $M'$ be the partial matrix of $M$ where $M'_{ij}$ is non-blank if there is a shortest path in $J$ from $x_i$ to $y_j$ that does not cross $P$, and let $M''$ be the partial matrix of $M$ where $M''_{ij}$ is non-blank if every shortest path from $x_i$ to $y_j$ crosses $P$. The partial matrix $M'$ has the Monge property, we get this by cutting open $J$ along $P$ and using the claim from Sect.~\ref{sec:Monge} (see $x_i, x_k$ in Fig.~\ref{fig:QxP}(d)). A similar argument (by taking two copies of $J$ open at $P$ and ``gluing'' the right side of one of them to the left side of the other) shows that $M''$ also has the Monge property (see $x'_i, x'_k$ in Fig.~\ref{fig:QxP}(d)). The non-blank entries of row $i$ in $M'$ are from $\ell(x_i)$ to $y_{|Y|} = \ell(x_1)$, and the rest of the row is in $M''$. The partial matrix $M'$ is a falling staircase matrix and $M''$ is an inverse falling staircase matrix, since $L(x_{i+1})$, the leftmost path from $x_{i+1}$ to $Y$, cannot cross the path $L(x_i)$ from its right side to its left (this is similar to the illustration in Fig.~\ref{fig:QxP}(a)). Let $N'$ and $N''$ be the corresponding staircase matrices partial to $N$ (with same blank entries as $M'$ and $M''$ respectively), both of them have the Monge property.

In order use the insights above, we take parts (i) and (ii) of the data structure of the previous section together with the following two new parts, where $X, Y$ and $J$ are as defined above (part (iii) is not necessary):
\begin{enumerate}[(i)]
% The fixed number below can be replaced with resume
\setcounter{enumi}{3}
% Give label
	\item For each two pieces $B$ and $B'$ we store $d_J(x, y)$ for every $x \in X$,  $y \in Y$.
	\item For each two pieces $B$ and $B'$, and every $x \in X$, we store $\ell(x)$.
\end{enumerate}

Notice that $d_J(x, y)$ and $\ell(x)$ depend on the specific pieces $B$ and $B'$. Since there are $n / r$ pieces, each piece has $\sqrt{r}$ boundary vertices, and each vertex is on the boundary of a constant number of pieces, the total space required for part (iv) is $O(n^2 / r)$ and for part (v) is $O(n^2 / r^{3/2})$. This does not increase the total space complexity of the data structure which remains $O(n^2 / r + n\sqrt{r})$.

\subsection{The Preprocessing Algorithm} \label{sec:Cp}

We have an $r$-decomposition of $G$ obtained by a recursive decomposition, where we decomposed each piece into two pieces using a separator and stopped the recursion at pieces of size $O(r)$, from which we took only the pieces that correspond to leaves of the recursion tree. Now we will take all the pieces of the entire recursive decomposition which defined the $r$-decomposition. We build a dense distance graph for $G$ based on this recursive decomposition using the algorithm of Fakcharoenphol and Rao \cite{FR06} with the improvement of Klein \cite{K05} in $O(n \log^2 n)$ time and $O(n \log n)$ space.

For two fixed pieces $B$ and $B'$ we use a subgraph of the dense distance graph to compute the distances from vertices of $X$ to vertices of $Y$ in $J$. We should choose carefully a set $S$ of pieces of the recursive decomposition to use, we must obey three rules -- we cannot take any piece of the $O(\log n)$ pieces that contain either $B$ or $B'$, we should cover all the paths from $X$ to $Y$, and the total number of boundary vertices of pieces in $S$ should not be too large.

We start with the entire graph $G$, which is the root of the recursive decomposition, as the single piece in $S$. As long as there is a piece $C$ in $S$ containing either $B$ or $B'$ in it ($C$ is an ancestor of $B$ or $B'$ in the recursive decomposition), we replace $C$ with both of its two children in the recursion tree. When we get to $B$ or $B'$, we remove them from $S$. At the end of this process, $S$ contains $O(\log n)$ pieces with $O(\sqrt{n})$ boundary vertices. All the vertices of $X$ and $Y$ are boundary vertices of pieces of $S$ (because otherwise such a vertex will be internal vertex of some member of $S$, which is an ancestor of $B$ or $B'$), and the paths between them in $J$ are covered by $S$ (since $S$ only misses internal vertices of $B$ and $B'$), as required. Denote the subgraph of the dense distance graph that include exactly the pieces of $S$ by $D$.

For $x \in X$, we compute part (iv), by computing $d_J(x, y)$ for every $y \in Y$ using the Dijkstra implementation of \cite{FR06} on $D$ in $O(\sqrt{n} \log^2 n)$ time.

We use the distances of vertices of $D$ from $x$ to find $\ell(x)$ for part (v) as well. We add to $D$ a vertex $u'$ inside $B$ and connect it to $x$, and a vertex $v'$ inside $B'$ and connect every vertex of $Y$ to it, as described before. Since we added an edge with length $0$ from $u'$ to $x$, for every vertex $z$ in $D$, $d_J(u',z) = d_J(x,z)$, so we can determine for each edge of $D$ whether it is tight in a constant time. Even though $D$ is not planar, we can define a cyclic order on the edges incident to each vertex using the embedding of $G$. The order of edges incident to a specific vertex of $D$ is defined by the order of the shortest paths in $G$ that they represent.  Due to space constraint we give the complete details of this process which takes total time of $O(n \log^3 n)$ in Appendix \ref{apx:C}. Then, we ignore all non-tight edges in $D$ and find $\ell(x)$ by finding the leftmost path from $u'$ to $v'$. This takes time proportional to the number of vertices in $D$, which is $O(\sqrt{n})$.

We perform the computation of parts (iv) and (v) for every $B$ and $B'$, and $x \in X$. For every $x$ this computation takes $O(\sqrt{n} \log^2 n)$ time, and we repeat it $O(n^2 / r^{3/2})$ times. The total time required for constructing the data structure is $O((n^{5/2} / r^{3/2}) \log^2 n + n\sqrt{r})$ (note that the term $n \log^3 n$ is dominated by this bound). We note that it is also possible to perform the computation in $O((n^3/r^2) \log n)$ time using the algorithm of Klein \cite{K05} for every pair of pieces, however this does not improve the time bound for our range of $r$.

\subsection{The Query Algorithm} \label{sec:Cq}

Let $u, v$ be a query pair. We use the data structure of this section to find $d_G(u, v)$ in $O(\sqrt{r})$ time. Again, let $B$, $B'$ be the pieces that contain $u$, $v$ respectively.

If $B = B'$ then we can answer the query in $O(\sqrt{r})$ time in same way as in the previous data structure, which uses the query algorithm of \cite{C06}. Either the shortest path from $u$ to $v$ is inside $B$ or the shortest path contains some vertex $b \in \partial B$. In other words, $d_G(u, v) = \min \lbrace d_B(u, v), \min_{b \in \partial B} \{d_G(u, b) + d_G(b, v)\} \rbrace$. We retrieve the distance $d_B(u, v)$ from part (ii) of the data structure in $O(\sqrt{r})$ time. For each $b \in \partial B$ we retrieve $d_G(u, b) + d_G(b, v)$ from part (i) in $O(1)$ time. Since $|\partial B| = O(\sqrt{r})$, it takes $O(\sqrt{r})$ time to go over all vertices of $\partial B$ and find $b \in \partial B$ that minimizes $d_G(u, b) + d_G(b, v)$. Then, $d_G(u, v)$ is the minimum between $d_B(u, v)$ and $d_G(u, b) + d_G(b, v)$.

For the case where $u$ and $v$ are in different pieces, let $X$ and $Y$ be as before. Fix $x_1$ to be an arbitrary vertex of $X$ and let $y_{|Y|} = \ell(x_1)$. Let the matrices $N$, $N'$, $N''$ be as before. We compute an entry $N_{ij} =  d_G(u, x_i) + d_J(x_i, y_j) + d_G(y_j, v)$ in $O(1)$ time using parts (i) and (iv) of the data structure. We determine whether an entry of $N$ is in $N'$ or in $N''$ in $O(1)$ time using part (v). Therefore, we can use the SMAWK algorithm \cite{AKM+87} as in \cite{AK90} to find the minimum value in $N$ in $O(\sqrt{r})$ time. This value is the requested distance.

We conclude that
for a planar graph with $n$ vertices and any $r \in (0, n)$, we can construct in $O((n^{5/2} / r^{3/2}) \log^2 n + n \sqrt{r})$ time a data structure of size $O(n^2 / r + n\sqrt{r})$ that computes the distance between any two vertices in $O(\sqrt{r})$ time.
As in Sect.~\ref{sec:B}, we set $r = n^2 / S$ and get:
\begin{theorem}
  For a planar graph with $n$ vertices and $S \in [n^{4/3},n^2]$, we can construct in $O((S^{3/2} / \sqrt{n}) \log^2 n)$ time a data structure of size $O(S)$ that computes the distance between any two vertices in $O(n / \sqrt{S})$ time.
\end{theorem}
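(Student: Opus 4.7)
The plan is to derive the theorem directly from the intermediate bound established immediately above, which asserts that for any $r \in (0,n)$ one can construct in $O((n^{5/2}/r^{3/2})\log^2 n + n\sqrt{r})$ time a data structure of size $O(n^2/r + n\sqrt{r})$ that answers distance queries in $O(\sqrt{r})$ time. That bound has already been justified: the preprocessing combines parts (i), (ii), (iv), (v) as in Section~\ref{sec:Cp}, and the query algorithm of Section~\ref{sec:Cq} handles the same-piece case via parts (i)--(ii) and the different-piece case by running SMAWK on the two Monge staircase matrices $N'$ and $N''$. All that remains is to set $r = n^2/S$ and verify that the hypothesis $S \in [n^{4/3}, n^2]$ absorbs the additive $n\sqrt{r}$ terms in both the space and preprocessing bound.

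First I would check the range: for $S \in [n^{4/3}, n^2]$ the value $r = n^2/S$ lies in $[1, n^{2/3}] \subset (0, n)$, so the intermediate bound applies, and the query time $\sqrt{r} = n/\sqrt{S}$ is immediate. For the space, $n^2/r + n\sqrt{r} = S + n^2/\sqrt{S}$; the hypothesis $S \geq n^{4/3}$ gives $S^{3/2} \geq n^2$, i.e.\ $S \geq n^2/\sqrt{S}$, so the first term dominates and the space collapses to $O(S)$.

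For the preprocessing, the bound becomes $(S^{3/2}/\sqrt{n})\log^2 n + n^2/\sqrt{S}$. The inequality $n^2/\sqrt{S} \leq S^{3/2}/\sqrt{n}$ is equivalent to $S^2 \geq n^{5/2}$, i.e.\ $S \geq n^{5/4}$, which is implied by $S \geq n^{4/3}$, so the first term dominates and the preprocessing cost is $O((S^{3/2}/\sqrt{n})\log^2 n)$, matching the theorem. The only mildly subtle point, hence the main obstacle, is verifying that the additive $n\sqrt{r}$ term (which arises from summing the per-piece work of $O(r^{3/2})$ over the $O(n/r)$ pieces for parts (i) and (ii)) is absorbed in both bounds; both reductions hinge on the single inequality $S \geq n^{4/3}$, and no further argument is required.
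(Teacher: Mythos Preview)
Your proposal is correct and follows exactly the paper's own approach: the paper simply states ``As in Sect.~\ref{sec:B}, we set $r = n^2/S$'' and records the resulting bounds, and you do the same substitution while carrying out the routine verification that the additive $n\sqrt{r}$ terms are absorbed for $S \ge n^{4/3}$. There is nothing to add.
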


\appendix

\section{The Query Algorithm of Section \ref{sec:B} when the Vertices Are in Two Different Pieces}\label{apx:B}

Consider the data structure of Sect.~\ref{sec:B}. Let $u, v$ be a query pair, such that $u$ is in a piece $B$ and $v$ is in another piece $B'$. The query algorithm that we describe here is similar to the one of \cite[(\S5)]{D96}, and the complete details are given there.\footnote{There is a small difference in our description of the algorithm; instead of Step 1 of \cite[(\S5)]{D96} we use part (i) of the data structure which \cite[(\S5)]{D96} does not have.} Let $H$ be the hole of $B$ that contains $v$ and let $H'$ the hole of $B'$ that contains $u$. Denote $X = \partial B[H]$ and $Y = \partial B'[H']$. We assume without loss of generality that $H \cap H'$ contains the infinite face.

A shortest path from $u$ to $v$ contains some vertex $x \in X$ and some vertex $y \in Y$ (it is possible that $x = y$). We may assume that there is no internal vertex of $B$ between $x$ and $y$ in the shortest path (since otherwise we can replace $x$ with another vertex of $X$). Therefore, $d_G(u, v) = \min_{x \in X, y \in Y} \{d_G(u, x) + d_H(x, y) + d_G(y, v)\}$. Next we show how to find $\min_{y \in Y}\{d_H(x, y) + d_G(y, v)\}$ for every $x \in X$ in $O(\sqrt{r} \log r)$ time.

For a fixed vertex $x_i \in X$ it is easy to find $y_i \in Y$ that minimizes $d_H(x_i, y_i) + d_G(y_i, v)$ in $O(\sqrt{r})$ time (the same member of $Y$ may be $y_i$ for different members of $X$), by going over all vertices of $Y$ and using parts (iii) and (i) of the data structure. Let $y_1$ and $y_2$ be the vertices that minimizes $d_H(x_i, y_i) + d_G(y_i, v)$ for $x_1$ and $x_2$. There is a shortest path from $x_1$ to $v$ that contains $y_1$, and similarly a shortest path from $x_2$ to $v$ that contains $y_2$. Let $x_3$ be a vertex between $x_1$ and $x_2$ in the clockwise order of $X$ starting at $x_1$. There is a vertex $y_3 \in Y$ that minimizes $d_H(x_3, y_3) + d_G(y_3, v)$ located between $y_1$ and $y_2$ in the counterclockwise order of vertices of $Y$ starting at $y_1$. Since otherwise, every shortest path from $x_3$ to $v$ must cross the shortest path from $x_1$ or from $x_2$ to $v$ that contains $y_1$ or $y_2$, respectively. Assume without loss of generality that the shortest path from $x_3$ to $v$ crosses the shortest path from $x_1$ to $v$, and let $w$ be the vertex in which the two shortest paths meet. Then, if we replace the suffix of the shortest path from $x_1$ that begins at $w$ with the suffix of the shortest path from $x_3$ we get a shorter path, this is a contradiction (see Fig.~\ref{fig:x3y3}). This gives the following algorithm for finding $y_i \in Y$ for every $x_i \in X$.

\begin{figure}[!t]
	\centering
	\includegraphics[scale=0.5]{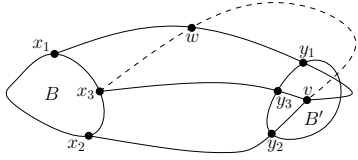}
	\caption{Finding $y_3$. For $x_3$ between $x_1$ and $x_2$, the vertex $y_3$ is between $y_1$ and $y_2$, since otherwise we get that every path from $x_3$ to $v$ (\emph{dashed}) crosses either the shortest path from $x_1$ to $v$ or from $x_2$ to $v$ at a vertex $w$.}
	\label{fig:x3y3}
\end{figure}

Let $x_1, x_2$ be two arbitrary vertices of $X$. Find $y_1$ and $y_2$ for $x_1$ and $x_2$ by going over all vertices of $Y$. Let $x_3$ be the middle vertex between $x_1$ and $x_2$ in the clockwise order of $X$ starting at $x_1$. Find $y_3$ by going over all vertices of $Y$ between $y_1$ and $y_2$ in the counterclockwise order of vertices of $Y$ starting at $y_1$. Continue recursively for the vertices of $X$ between $x_1$ and $x_3$ and the vertices of $Y$ between $y_1$ and $y_3$, and also for the vertices of $X$ between $x_3$ and $x_2$ and the vertices of $Y$ between $y_3$ and $y_2$. Similarly, find $y_i$ for every $x_i$ between $x_1$ and $x_2$ in the counterclockwise order of $X$ starting at $x_1$.

We conclude that we can find $\min_{y \in Y}\{d_H(x, y) + d_G(y, v)\}$ for every $x \in X$ in $O(\sqrt{r} \log r)$ time. Now, we go over all vertices of $X$, and using part (i) of the data structure we find $\min_{x \in X, y \in Y} \{d_G(u, x) + d_H(x, y) + d_G(y, v)\}$ in $O(\sqrt{r})$ time. The total query time is $O(\sqrt{r} \log r)$.

\section{A Cyclic Order for Edges Incident to a Vertex of $D$}\label{apx:C}

In this appendix we define a cyclic order on the edges incident to a specific vertex in the graph $D$, which is a subgraph of the dense distance graph. We use this order in the preprocessing algorithm of Sect.~\ref{sec:Cp}, in order to find $\ell(x)$ for a boundary vertex $x$. We define the order of the edges such that the leftmost shortest paths from $x$ to $Y$ in $G$, and in $D$, both end at the same vertex of $Y$ ($x, Y, \ell(x)$ and $D$ are defined in Sect.~\ref{sec:C}). A vertex $w$ of $D$ is a boundary vertex of more than one piece, however the order between two edges in two different pieces is clear from the embedding of $G$ (the pieces of $D$ are pairwise edge disjoint). Therefore, here we define the left-to-right order of the edges inside each piece. The left-to-right order of the edges, is in fact a left-to-right order of the boundary vertices, because the edges of a piece in the dense distance graph connect a vertex on the boundary of the piece to all other vertices on the boundary. We define the left-to-right order from $w$ to the other boundary vertices of the piece according to the left-to-right order of the leftmost shortest paths from $w$ to the other vertices. This order allows us to find $\ell(x)$ as required. The order that we define does not depend on the specific graph $D$, so we perform the procedure described here only once for every boundary vertex of every piece.

Let $w$ be a boundary vertex of a piece $C$. When we compute the distances from $w$ to the other vertices of $\partial C$ for the dense distance graph, we use the algorithm of Klein \cite{K05}, which maintains a \emph{dynamic tree} \cite{ST83} that contains the rightmost shortest path from $w$ to every vertex of $C$. Since we are interested in leftmost shortest paths we use a symmetric version of \cite{K05}, by replacing the roles of left and right. Denote this leftmost shortest path tree rooted at $w$ by $T$.

Let $z$ and $z'$ be two vertices of $\partial C$ different from $w$. We show how to decide in $O(\log |C|)$ time which of the two vertices is to the left of the other, with respect to $w$. Let $t$ be the nearest common ancestor of $z$ and $z'$ in $T$. We can find $t$ and the two edges that lead from it to $z$ and to $z'$ in $O(\log |C|)$ time from the dynamic tree \cite{ST83}. First assume that $t \neq z, z'$. Consider the following three edges incident to $t$ in $T$ -- the edge that connects $t$ to its parent (if $t$ = $w$ then we add a dummy edge inside the hole that $w$ lies on its boundary for this purpose), the edge that leads from $t$ to $z$, and the edge that leads from $t$ to $z'$. The order of these edges around $t$ determine the order between $z$ and $z'$ (see Fig.~\ref{fig:compare}(a)). Now assume without loss of generality that $t = z$. The vertex $z$ lies on the boundary of some hole of $C$, denote this hole by $H$. There are two edges incident to $z$ on the boundary of $H$. We can find the two edges when we find the piece $C$. Consider the edge that connects $z$ to its parent in $T$, the edge that leads from $z$ to $z'$, and the place of $H$ among the edges incident to $z$. If in the clockwise order of edges around $z$ starting at the edge that connects $z$ to it parent, the edge that leads to $z'$ is before $H$, then $z'$ is to the left of $z$, otherwise $z$ is to the left of $z'$ (see Fig.~\ref{fig:compare}(b)).

\begin{figure}[t]
	\centering
	\subfigure[$t$ is the nearest common ancestor of $z$ and $z'$.]{\hspace{1.5cm}\includegraphics[scale=0.45]{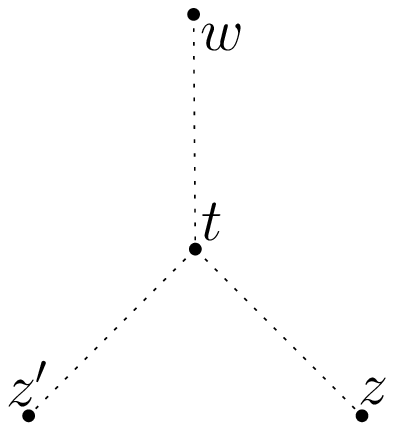}\hspace{1.5cm}} \qquad
	\subfigure[$z$ is an ancestor of $z'$ and lies on the boundary of $H$.]{\hspace{1.5cm}\includegraphics[scale=0.45]{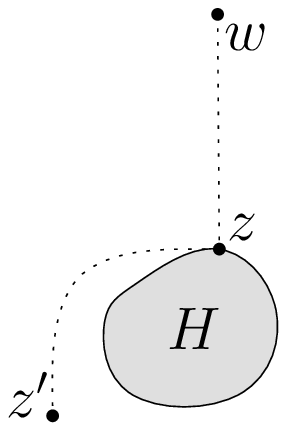}\hspace{1.5cm}} \\
	\caption{Two cases where $z$ is to the left of $z'$. The \emph{dotted} lines represent paths in $T$.}
	\label{fig:compare}
\end{figure}

Since we compare two vertices in $O(\log |C|)$ time, we can use comparison sort to sort the $O(\sqrt{|C|})$ vertices of $\partial C$ around the vertex $w$ from left to right in $O(\sqrt{|C|} \log^2 |C|)$ time. We repeat the process for each vertex of $\partial C$ in a total of $O(|C| \log^2 |C|)$ time. For the pieces of a single layer of the recursive decomposition, the total time is $O(n \log ^2 n)$. And for all the pieces of the dense distance graph the process takes $O(n \log^3 n)$ time.

\end{document}